\def\identity{\leavevmode\hbox{\small1\kern-3.8pt\normalsize1}}
\newtheorem{theorem}{Theorem}
\newtheorem{conjecture}{Conjecture}
\newcommand{\ket}[1]{\left | #1 \right\rangle}
\newcommand{\bra}[1]{\left \langle #1 \right |}
\newcommand{\Tr}{\mathrm{Tr}}
\newcommand{\braket}[2]{\left\langle #1|#2\right\rangle}
\newcommand{\proj}[1]{\ket{#1}\bra{#1}}
\renewcommand{\epsilon}{\varepsilon}
\begin{document}

\title{Genuine $N$-partite entanglement without $N$-partite correlation functions}

\author{Minh~Cong~Tran}
\affiliation{School of Physical and Mathematical Sciences, Nanyang Technological University, 637371 Singapore}
\affiliation{Joint Center for Quantum Information and Computer Science, University of Maryland, College Park, Maryland 20742, USA}

\author{Margherita~Zuppardo}
\affiliation{School of Physical and Mathematical Sciences, Nanyang Technological University, 637371 Singapore}
\affiliation{Science Institute, University of Iceland, Dunhaga 3, IS-107 Reykjavik, Iceland}

\author{Anna~de~Rosier}
\affiliation{Institute of Theoretical Physics and Astrophysics, Faculty of Mathematics, Physics and Informatics, University of Gda\'nsk, 80-308 Gda\'nsk, Poland}

\author{Lukas~Knips}
\affiliation{Max-Planck-Institut f\"ur Quantenoptik, Hans-Kopfermann-Stra{\ss}e 1, 85748 Garching, Germany}
\affiliation{Department f\"ur Physik, Ludwig-Maximilians-Universit\"at, 80797 M\"unchen, Germany}

\author{Wies{\l}aw~Laskowski}
\affiliation{Institute of Theoretical Physics and Astrophysics, Faculty of Mathematics, Physics and Informatics, University of Gda\'nsk, 80-308 Gda\'nsk, Poland}

\author{Tomasz~Paterek}
\affiliation{School of Physical and Mathematical Sciences, Nanyang Technological University, 637371 Singapore}
\affiliation{MajuLab, CNRS-UNS-NUS-NTU International Joint Research Unit, UMI 3654 Singapore, Singapore}

\author{Harald~Weinfurter}
\affiliation{Max-Planck-Institut f\"ur Quantenoptik, Hans-Kopfermann-Stra{\ss}e 1, 85748 Garching, Germany}
\affiliation{Department f\"ur Physik, Ludwig-Maximilians-Universit\"at, 80797 M\"unchen, Germany}

\begin{abstract}
A genuinely $N$-partite entangled state may display vanishing $N$-partite correlations measured for arbitrary local observables. In
such states the genuine entanglement is noticeable solely in correlations between subsets of particles. A straightforward way to
obtain such states for odd $N$ is to design an `anti-state' in which all correlations between an odd number of observers are exactly
opposite. Evenly mixing a state with its anti-state then produces a mixed state with no $N$-partite correlations, with many of them
genuinely multiparty entangled. Intriguingly, all known examples of `entanglement without correlations' involve an \emph{odd} number of
particles. Here we further develop the idea of anti-states, thereby shedding light on the different properties of even and odd particle
systems. We conjecture that there is no anti-state to any pure even-$N$-party entangled state making the simple construction scheme
unfeasable. However, as we prove by construction, higher-rank examples of `entanglement without correlations' for arbitrary even $N$
indeed exist. These classes of states exhibit genuine entanglement and even violate an $N$-partite Bell inequality, clearly demonstrating
the non-classical features of these states as well as showing their applicability for quantum communication complexity tasks.
\end{abstract}

\pacs{03.65.Ud}

\maketitle


Quantum entanglement is present in quantum states that cannot be obtained from uncorrelated states by local operations and classical communication~\cite{locc,RevModPhys.81.865}.
It turns out that for pure states the existence of entanglement is fully captured by $N$-partite correlation functions only:
A pure state is entangled if and only if the sum of squared $N$-partite correlation functions exceeds certain bound~\cite{QuantInfComp.8.773,PhysRevA.77.062334,I,PhysRevA.92.050301,Random2016}.
One may then wonder if similar detection methods could exist for mixed states, i.e. whether appropriate processing of only $N$-partite correlation functions detects entanglement in all mixed states.
The states we consider here demonstrate vividly that such a universal entanglement criterion does not exist.
Despite vanishing $N$-partite correlation functions in all possible local measurements, these states can be even genuinely $N$-partite entangled.
As a matter of fact the genuine $N$-partite entanglement is due to non-vanishing correlations between less than $N$ particles, so-called lower order correlations.

The first example of such a state was given in~\cite{PhysRevLett.101.070502} and consists of an even mixture of two $W$ states between an odd number of qubits.
The two states have exactly opposite $N$-partite correlations such that they average out in the even mixture.
More recently it was shown that any pure quantum state has an `anti-state' where all correlation functions have opposite signs, but only between an \emph{odd} number of observers~\cite{PhysRevA.86.032105,Us2015}.
Then, the equal mixture of a pure state with odd number of qubits and its anti-state produces a mixed state with vanishing $N$-partite correlation functions.
Many of such `no-correlation' states are genuinely $N$-partite entangled and even an infinite family of such states with two continuous parameters could be constructed~\cite{Us2015}.

Here we generalise the notion of anti-states and study their relations to entanglement without correlations.
A number of problems was raised in Refs.~\cite{PhysRevLett.101.070502,PhysRevA.86.032105,Us2015} which have now been solved.
In particular, we provide analytical plausibility argument (Sec. \ref{general}) and strong numerical evidence (Conjecture \ref{CON_CON}) that there is no anti-state to any genuinely multiparty entangled pure state of an even number of qubits.
This explains why previous relatively simple examples of entanglement without correlations could be constructed for odd number of qubits only.
Therefore, in the case of an even number of particles, this phenomenon requires mixing of at least three pure quantum states.
We provide here analytical examples of rank-$4$ mixed states that are both genuinely $N$-partite entangled and have vanishing all $N$-partite correlation functions, for arbitrary even $N$ (Sec. \ref{ewc}).
Up to numerical precision also rank-$3$ mixed states with this property exist.
Remarkably, they violate suitably designed Bell-type inequality (Sec. \ref{SEC_BELL}).
In order to further emphasise that entanglement without correlations is not very unusual, we extend the previous example of the infinite family with two continuous parameters to exponentially many in $N$ continuous parameters (Sec. \ref{odd}).
This is achieved with the help of a simple identifier of genuine multipartite entanglement that also illustrates limits to entanglement detection with only bipartite correlation functions~\cite{PhysRevA.72.022340,JOptSocAmB.24.275,PhysRevA.86.042339,PhysRevA.87.034301,Science.344.1256,JPhysA.47.424024}.


\section{Basic notions}

\subsection{Genuine multipartite entanglement}

A mixed quantum state of $N$ particles is genuinely $N$-partite entangled if it cannot be written as:
\begin{equation}
\rho \ne \sum_j p_j \rho_{A_j}^j \otimes \rho_{B_j}^j,
\end{equation}
where $A_j : B_j$ is a partition of the $N$ particles and $p_j$'s are probabilities.
Note that different terms in this convex decomposition may involve different partitions.
The states $\rho_{A_j}^j \otimes \rho_{B_j}^j$ can always be chosen pure, in which case they are called bi-product states.
All our examples will exploit the fact that if the support of $\rho$ does not contain a single bi-product state, then $\rho$ must be genuinely $N$-partite entangled.

\subsection{Correlation functions}

The correlation function is a standard statistical quantifier defined as the expectation value of a product of measurement results.
Consider dichotomic observables, i.e. the measurement results are $\pm 1$, conducted on multiple qubits.
Such observables are parameterised by unit vectors on a sphere.
We denote by $\vec m_n$ the vector encoding the observable of the $n$th party.
If such observables are measured on every particle from a $N$-partite quantum system in state $\rho$ one obtains the $N$-partite (quantum) correlation function:
\begin{equation}
E(\vec m_1, \dots, \vec m_N) = \Tr(\rho \, \vec m_1 \cdot \vec \sigma \otimes \dots \otimes \vec m_N \cdot \vec \sigma),
\end{equation}
where $\vec \sigma=\left(\sigma_x,\sigma_y,\sigma_z\right)$ is the vector of Pauli operators.
We shall also write $\sigma_x, \sigma_y, \sigma_z$ as $\sigma_1, \sigma_2, \sigma_3$, respectively.
It is customary to introduce correlation tensor $T$ or, respectively, its coefficients
\begin{equation}
T_{j_1 \dots j_N}(\rho) = \Tr(\rho \, \sigma_{j_1} \otimes \dots \otimes \sigma_{j_N}),
\end{equation}
for the $N$-partite correlation functions measured explicitly along the $\vec x,\vec y,\vec z$ axes. Here $j_n = 1,2,3$.
By writing $\rho$ in the basis of tensor products of Pauli operators, one easily verifies the tensor transformation law:
\begin{equation}
E(\vec m_1, \dots, \vec m_N) = \sum_{j_1, \dots, j_N = 1}^3 T_{j_1 \dots j_N} (\vec m_1)_{j_1} \dots (\vec m_N)_{j_N},
\end{equation}
where $(\vec m_n)_{j_n}$ is the component of the vector $\vec m_n$ along the $j_n$th axis.
In the present context this implies that it is sufficient to ensure that $T_{j_1 \dots j_N} = 0 $ for all $j_1,\dots,j_N = 1,2,3$ to guarantee that $N$-partite correlation functions vanish for arbitrary local measurements.

One could of course also measure subsets of all $N$ particles, in which case the resulting correlation functions are called lower order correlations.
We will be only interested in these correlations along the $\vec x,\vec y,\vec z$ axes, in which case they can be calculated as follows:
\begin{equation}
T_{\mu_1 \dots \mu_N}(\rho) = \Tr(\rho \, \sigma_{\mu_1} \otimes \dots \otimes \sigma_{\mu_N}),
\end{equation}
where index $\mu_n = 0,1,2,3$, i.e. additionally to Pauli operators it also includes $\sigma_0$, the identity operator, for those parties who do not conduct measurements.
For example, bipartite correlation functions between the first two observers are denoted by $T_{j_1 j_2 0 \dots 0}$.

\subsection{Anti-states}

Given a pure or mixed state $\rho$, with the $N$-partite correlation functions $T_{j_1 \dots j_N}$,
we define its anti-state $\bar \rho$ by the requirement that all its $N$-partite correlation functions have opposite sign, i.e. $T_{j_1 \dots j_N} (\bar\rho)= - T_{j_1 \dots j_N}(\rho)$ for all indices $j_n = 1,2,3$.
No assumptions are made about the lower order correlation functions.

Ref. \cite{Us2015} presented a method to build an anti-state to an arbitrary pure state with an odd number $N$ of qubits. 
However, this method does not apply to cases where $N$ is even. 
We therefore need to use different approaches depending on the parity of $N$, as we will discuss in the next sections.


\section{N even}

Let us begin with the problem of existence of anti-states for an even number of qubits.
We argue that most likely all genuinely $N$-partite entangled pure states of even $N$ do not admit anti-states.
Nevertheless, this does not imply impossibility of entanglement without correlations.
It just says that more than two pure states have to be present in the mixture.
Indeed, we will provide such examples for every even $N \ge 4$.

We start with bipartite systems where one can easily exclude existence of an anti-state to arbitrary pure entangled state.
\begin{theorem}
There is no anti-state to an arbitrary, entangled pure state of two qubits.
\end{theorem}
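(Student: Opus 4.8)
The plan is to argue by contradiction, exploiting the fact that a pure entangled two-qubit state possesses a direction of \emph{perfect} correlation, while its hypothetical anti-state would be forced to live on an incompatible subspace. First I would write an arbitrary two-qubit state in the Pauli basis,
\begin{equation}
\rho = \frac{1}{4}\Big( \identity\otimes\identity + \vec a\cdot\vec\sigma\otimes\identity + \identity\otimes\vec b\cdot\vec\sigma + \sum_{i,j=1}^3 T_{ij}\,\sigma_i\otimes\sigma_j \Big),
\end{equation}
so that the bipartite correlation tensor is exactly the matrix $T$. A crucial preliminary observation is that admitting an anti-state is invariant under local unitaries: under $U\otimes V$ the tensor transforms linearly as $T\mapsto R_U T R_V^\top$ with $R_U,R_V\in\mathrm{SO}(3)$, so this map sends $-T$ to $-R_U T R_V^\top$; hence if $\bar\rho$ is an anti-state of $\rho$ then $(U\otimes V)\bar\rho(U\otimes V)^\dagger$ is an anti-state of $(U\otimes V)\rho(U\otimes V)^\dagger$. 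This lets me bring $\rho$ to Schmidt form $\ket\psi=\cos\theta\ket{00}+\sin\theta\ket{11}$ with $\cos\theta,\sin\theta\neq 0$, for which a direct computation gives the diagonal tensor $T=\mathrm{diag}(\sin2\theta,\,-\sin2\theta,\,1)$; entanglement is equivalent to $\sin2\theta\neq 0$.

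Next I would assume an anti-state $\bar\rho$ exists and extract a contradiction from positivity. Its tensor must be $\bar T=-T=\mathrm{diag}(-\sin2\theta,\,\sin2\theta,\,-1)$. The decisive entry is $\bar T_{33}=\Tr(\bar\rho\,\sigma_z\otimes\sigma_z)=-1$. Since $\sigma_z\otimes\sigma_z$ has eigenvalues $\pm1$ with the $-1$ eigenspace equal to $V=\mathrm{span}\{\ket{01},\ket{10}\}$, a normalised state can attain expectation $-1$ only if its entire support lies in $V$; thus $\mathrm{supp}(\bar\rho)\subseteq V$. The key computation is then to restrict $\sigma_x\otimes\sigma_x$ and $\sigma_y\otimes\sigma_y$ to $V$: both exchange $\ket{01}$ and $\ket{10}$, and in fact act \emph{identically} on $V$ (the two factors of $i$ produced by $\sigma_y\otimes\sigma_y$ multiply to $+1$). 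Consequently $\bar T_{11}=\Tr(\bar\rho\,\sigma_x\otimes\sigma_x)=\Tr(\bar\rho\,\sigma_y\otimes\sigma_y)=\bar T_{22}$, whereas the anti-state condition demands $\bar T_{11}=-\sin2\theta=-\bar T_{22}$. These are compatible only if $\sin2\theta=0$, contradicting entanglement.

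I expect the reduction to Schmidt form to be the step needing the most care — in particular the small lemma that admitting an anti-state is a local-unitary invariant — since everything downstream hinges on exhibiting the perfectly correlated axis $T_{33}=1$. The remainder is then essentially kinematic: perfect correlation along $z$ confines any anti-state to the odd-parity subspace $V$, on which $\sigma_x\otimes\sigma_x$ and $\sigma_y\otimes\sigma_y$ degenerate and can no longer carry the opposite signs required of $\bar T_{11}$ and $\bar T_{22}$. It is worth stressing that cruder scalar invariants do not close the argument by themselves: for instance $\det\bar T=-\det T=\sin^2 2\theta>0$ is not in itself forbidden, since genuinely mixed two-qubit states with positive correlation-tensor determinant do exist (e.g.\ $T=\mathrm{diag}(-a,-a,a)$ with $0<a\le\frac{1}{3}$). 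This is precisely why the proof must invoke positivity through the support of $\bar\rho$ rather than a single invariant.
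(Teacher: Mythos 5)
Your proposal is correct and follows essentially the same route as the paper: reduce to Schmidt form, observe $T_{zz}=1$ forces any anti-state's support into $\mathrm{span}\{\ket{01},\ket{10}\}$, and note that on this subspace $\sigma_x\otimes\sigma_x$ and $\sigma_y\otimes\sigma_y$ coincide, so $\bar T_{xx}=\bar T_{yy}$ contradicts the required opposite signs unless the state is a product. Your explicit local-unitary-invariance lemma and the remark on why scalar invariants alone do not suffice are sound additions, but the core argument is identical.
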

\begin{proof}
Any pure state can be written in the Schmidt form $\ket{\psi} = a\ket{00}+b\ket{11}$, with real coefficients.
In this basis, the only non-zero elements of the correlation tensor are 
\begin{eqnarray}
\label{2q}
T_{zz}(\psi) & = & 1, \\
T_{xx}(\psi) & = & -T_{yy}(\psi) = 2 a b.
\end{eqnarray}
Therefore, the hypothetical anti-state (mixed states allowed) has to have $T_{zz} = -1$, and hence it lies in the subspace spanned by $\ket{01}$ and $\ket{10}$.
Since all such states have $T_{xx} = T_{yy}$, only the product state with $ab=0$ has an anti-state. %
\end{proof}

There is strong numerical evidence that a pure genuinely $N$-partite entangled state of $N=4$ and $N=6$ qubits does not admit an anti-state.
We are therefore conjecturing this in general. 
\begin{conjecture}
\label{CON_CON}
There is no anti-state to an arbitrary genuinely $N$-partite entangled pure state of even-$N$ qubits.
\end{conjecture}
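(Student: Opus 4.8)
The plan is to reformulate the question geometrically and to isolate exactly where the parity of $N$ enters. Existence of an anti-state to $\ket{\psi}$ is equivalent to the membership $-T(\psi)\in\mathcal{T}_N$, where $\mathcal{T}_N=\{\,(T_{j_1\dots j_N}(\rho))_{j_n\in\{1,2,3\}}:\rho\ge 0,\ \Tr\rho=1\,\}$ is the (convex, compact) body of achievable $N$-partite correlation tensors, the linear image of the state space under $\rho\mapsto T(\rho)$. The key structural input is the antiunitary spin-flip map $\rho\mapsto\sigma_y^{\otimes N}\,\rho^{T}\,\sigma_y^{\otimes N}$, which sends density matrices to density matrices and, as a short computation with $\sigma_y\sigma_{j}\sigma_y$ and $\sigma_j^{T}$ shows, multiplies every full-weight correlation by $(-1)^N$. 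For odd $N$ this already realises $-T(\psi)$ by a pure state, so $\mathcal{T}_N$ is centrally symmetric and anti-states always exist; for even $N$ the same map fixes $T(\psi)$ and provides no symmetry whatsoever. The conjecture is precisely the assertion that, for even $N$, this loss of symmetry is strong enough that $-T(\psi)$ falls \emph{outside} $\mathcal{T}_N$ for every genuinely multipartite entangled pure $\ket{\psi}$.

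By the separating-hyperplane theorem, $-T(\psi)\notin\mathcal{T}_N$ holds iff there is a full-correlation observable $\mathcal{W}=\sum_{j_1\dots j_N}W_{j_1\dots j_N}\,\sigma_{j_1}\otimes\dots\otimes\sigma_{j_N}$ with $\langle\psi|\mathcal{W}|\psi\rangle<-\lambda_{\max}(\mathcal{W})$. I would therefore try to construct such a witness $\mathcal{W}=\mathcal{W}(\psi)$, modelling the construction on the bipartite proof above. There the mechanism is transparent: purity plus entanglement forces an \emph{extremal} correlation ($T_{zz}=1$), which pins any anti-state into the $\ket{01},\ket{10}$ subspace, and inside that subspace a second forced relation ($T_{xx}=T_{yy}$) is incompatible with the required flip $T_{xx}=-T_{yy}$. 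The natural generalisation is to first bring $\ket{\psi}$ to a tensor normal form under the local group $SO(3)^{\otimes N}$, then to identify the forced sign relations among its full-weight correlations, and finally to package them into a single $\psi$-adapted witness certifying non-membership.

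The main obstacle is that neither ingredient of the bipartite argument survives unchanged, which is exactly why only numerical evidence is available for $N=4,6$. First, a generic genuinely entangled state with $N\ge 3$ is an eigenstate of no full-weight Pauli product with eigenvalue $\pm1$, so no single extremal correlation pins down the support; the confinement step must be replaced by a softer, quantitative estimate. Second, one cannot hope for a \emph{uniform} polynomial certificate valid on all states: the simplest $SO(3)^{\otimes N}$-invariant of odd degree, the Levi-Civita cubic $\sum\epsilon_{a_1b_1c_1}\cdots\epsilon_{a_Nb_Nc_N}\,T_{a_1\dots a_N}T_{b_1\dots b_N}T_{c_1\dots c_N}$ (equal to $6\det T$ for $N=2$), is already \emph{not} sign-definite over two-qubit states, since the isotropic state $\tfrac14(\identity+\tfrac13\sum_j\sigma_j\otimes\sigma_j)$ has positive cubic invariant while an entangled pure state has negative one. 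Hence the separating witness must genuinely depend on $\ket{\psi}$ and must encode the \emph{nonlinear} purity constraint; the honest formulation is a positive-semidefinite completion problem, namely whether the prescribed flipped tensor $-T(\psi)$ extends to a PSD matrix by some choice of the $4^N-3^N-1$ free lower-order correlations, whose feasibility one would attack by an SDP or sum-of-squares hierarchy. Proving that this hierarchy certifies infeasibility \emph{simultaneously} for the whole high-dimensional, non-convex manifold of genuinely entangled pure even-$N$ states, with the parity entering only through the failure of central symmetry, is the crux that keeps the statement a conjecture.
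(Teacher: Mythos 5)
Your attempt and the paper's treatment are both, by design, non-proofs of an open conjecture, but they are genuinely different in kind, so a comparison is worth recording. The paper's ``Evidence'' is purely numerical: simulated annealing over pure candidates $\ket{\phi}$ minimizing the length of correlation $L\bigl(\tfrac12\proj{\psi}+\tfrac12\proj{\phi}\bigr)$ for GHZ, W, Dicke and a thousand random states at $N=4,6$, supplemented (in Sec.~\ref{general}) by the plausibility argument that the only known recipe for an anti-state --- $\sigma_y$ conjugation plus partial transposition on an odd subset --- fails to be completely positive on genuinely entangled even-$N$ states. Your convex-geometric reformulation, ``an anti-state exists iff $-T(\psi)$ lies in the body $\mathcal{T}_N$ of achievable full-weight correlation tensors,'' together with the observation that the spin flip $\rho\mapsto\sigma_y^{\otimes N}\rho^{T}\sigma_y^{\otimes N}$ multiplies every full-weight correlation by $(-1)^N$, is correct (I checked the sign computation, the separating-hyperplane equivalence $\langle\psi|\mathcal{W}|\psi\rangle<-\lambda_{\max}(\mathcal{W})$, and the counterexample showing $\det T$ is not sign-definite), and it is essentially the analytic core of the paper's Sec.~\ref{general} argument, made sharper. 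Your framing buys two things the paper's numerics do not: it covers \emph{mixed} anti-states (the paper's definition allows them, cf.\ Theorem~1, yet its search is restricted to pure $\ket{\phi}$, so linearity of $T$ means a mixed anti-state could in principle evade that search), and it reduces the conjecture to a well-posed PSD-completion/SDP-hierarchy question one could at least attack case by case. What it does not buy is a proof: you correctly identify, and candidly admit, that no uniform witness or certificate is produced, so the conjecture remains exactly as open after your analysis as after the paper's. The one point where you are more optimistic than warranted is the implicit suggestion that the failure of central symmetry of $\mathcal{T}_N$ is the \emph{only} obstruction; the paper's own numerics show that even-$N$ bi-product states of the form $\ket{\psi_{N-1}}\otimes\ket{\phi_1}$ \emph{do} admit anti-states, so any successful witness must additionally see genuine multipartite entanglement, which is precisely the nonlinearity you flag but do not resolve.
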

\begin{proof}[Evidence] 
Our aim is to verify to a high precision whether an anti-state exists to a preselected state $\ket{\psi}$.
In our numerical approach we parameterize a candidate state $\ket{\phi}$ and use Simulated Annealing \cite{Kirkpatrick} to globally minimize the length of correlation \cite{PhysRevA.92.050301} 
of the even mixture $\rho = \frac{1}{2}\ket{\psi}\bra{\psi}+\frac12 \ket{\phi}\bra{\phi}$.
The length of correlation is defined as:
\begin{equation}
L(\rho) = \sum_{j_1,\dots,j_N=1}^3 T_{j_1\dots j_N}^2(\rho).
\end{equation}
If anti-states to $\ket{\psi}$ exist, then $L(\rho)$ will converge to 0 while $\ket{\phi}$ converges to an anti-state. 

We tested this algorithm on states of $N=3$ and $N=5$ qubits, using the genuinely $N$-partite entangled input state $\ket{\psi}$.
The candidate state $\ket{\phi}$ converged to an approximate anti-state, in accordance with what is known about anti-states with an odd number of qubits.
For $N=4$ and $N=6$ we tested both the states for which it is known that they have no anti-state, the Greenberger-Horne-Zeilinger states~\cite{Us2015}, the W states, the Dicke states as well as a thousand of randomly chosen states.
We found that their length of correlation always converged to a finite value larger than $0$, which indicates that there is no anti-state.

On the other hand, in addition to $\ket{\phi}$ we also varied the state $\ket{\psi}$ as we minimized the length of correlation.
In this case, for all choices of initial states, the algorithm quickly converged to a pair of state--anti-state.
However, all such states were not genuinely $N$-partite entangled.
Instead, each pair was of the form $\{\ket{\psi_{N-1}} \otimes \ket{\phi_1}, \ket{\overline{\psi}_{N-1}} \otimes \ket{\phi_1} \}$,
i.e. a bi-product of a state or anti-state between (odd) $N-1$ qubits and a common single-qubit state. 
This strongly suggests that a genuinely multi-partite entangled state of $N=4$ and $N=6$ qubits does not have an anti-state.
\end{proof}

\subsection{Entanglement without correlations}
\label{ewc}

Although anti-states to \textit{pure} $N$-partite entangled states most likely do not exist, one can find anti-states to \textit{mixed} entangled states. 
These can subsequently be used to construct examples of states with no $N$-partite correlation functions yet containing genuine $N$-partite entanglement.
The simplest such example involves $4$ qubits.
For two qubits, while anti-states to mixed entangled states can easily be constructed, all states with vanishing correlation functions are separable.
A simple anti-state example can be seen as follows.
Consider the state $\rho = \frac{1}{2} \proj{\psi^+} + \frac{1}{2} \proj{11}$.
Being a mixture of a pure entangled state and a product state, $\rho$ is entangled~\cite{TheorComputSci.292.589,JPhysA.47.424025}.
Its anti-state is given by $\bar \rho = \frac{1}{2} \proj{\psi^-} + \frac{1}{2} \proj{11}$ as can be directly verified.
The anti-state is also entangled by the same argument, but the even mixture of the two states, $\frac{1}{2}\left(\rho+\bar \rho\right)$, is separable.

The following theorem proves in general the absence of entanglement in bipartite states without bipartite correlation functions.
\begin{theorem}
Two-qubit states with vanishing bipartite correlation functions are separable.
\end{theorem}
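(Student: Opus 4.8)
The plan is to expand the two-qubit state in the complete operator basis of tensor products of Pauli matrices,
\begin{equation}
\rho = \frac{1}{4} \sum_{\mu, \nu = 0}^{3} T_{\mu\nu}(\rho)\, \sigma_\mu \otimes \sigma_\nu,
\end{equation}
where $\sigma_0 = \identity$ and $T_{\mu\nu}(\rho) = \Tr(\rho\,\sigma_\mu\otimes\sigma_\nu)$. Normalisation fixes $T_{00}=1$, the coefficients with exactly one vanishing index are the components of the two local Bloch vectors $\vec a$ and $\vec b$, and the coefficients $T_{ij}$ with $i,j\in\{1,2,3\}$ are precisely the bipartite correlation functions. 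By the tensor transformation law, the hypothesis that all bipartite correlations vanish for arbitrary local measurements is equivalent to $T_{ij}(\rho)=0$ for all $i,j\in\{1,2,3\}$.

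First I would drop those terms, leaving
\begin{equation}
\rho = \frac{1}{4}\left( \identity\otimes\identity + \vec a\cdot\vec\sigma\otimes\identity + \identity\otimes\vec b\cdot\vec\sigma \right).
\end{equation}
The tempting shortcut of identifying $\rho$ with the product $\rho_A\otimes\rho_B$ of its marginals is \emph{false}: that product would reintroduce correlation terms proportional to $\sigma_i\otimes\sigma_j$ which are absent here. The correct observation is that the two surviving nontrivial operators act on different tensor factors, so they are simultaneously diagonalised in the product basis formed from the eigenstates of $\hat a\cdot\vec\sigma$ on the first qubit and of $\hat b\cdot\vec\sigma$ on the second, with $\hat a$ and $\hat b$ the unit vectors along $\vec a$ and $\vec b$.

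Consequently $\rho$ is diagonal in this product basis, that is, a sum of orthogonal product projectors $\ket{s_a}\bra{s_a}\otimes\ket{s_b}\bra{s_b}$ whose coefficients are the eigenvalues of $\rho$. Since $\rho$ is a legitimate density operator these coefficients are nonnegative and sum to one, so they form a probability distribution and the expression is manifestly a convex mixture of product states, i.e.\ separable. The only genuine subtlety, and the step I would flag as the main pitfall, is exactly this product-state red herring: vanishing correlations do \emph{not} force a product state but only a classical mixture of product states, which is nevertheless separable.
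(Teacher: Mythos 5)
Your proof is correct, and it diverges from the paper's argument at the decisive step. Both proofs begin identically, expanding $\rho$ in the Pauli product basis and using the hypothesis to reduce it to $\frac{1}{4}(\identity\otimes\identity + \vec a\cdot\vec\sigma\otimes\identity + \identity\otimes\vec b\cdot\vec\sigma)$. The paper then computes the eigenvalues $\frac{1}{4}(1\pm|\vec a|\pm|\vec b|)$, notes that partial transposition leaves this spectrum unchanged, and invokes the Peres--Horodecki theorem (PPT implies separability for $2\times 2$ systems). You instead observe that the two nontrivial terms act on different tensor factors and hence commute, so $\rho$ is diagonal in the product eigenbasis of $\hat a\cdot\vec\sigma$ and $\hat b\cdot\vec\sigma$; positivity of $\rho$ then makes the diagonal entries a probability distribution over orthogonal product projectors, which is an explicit separable decomposition. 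Your route is more elementary and constructive: it bypasses the PPT criterion entirely and actually establishes the stronger conclusion that such states are classical--classical (diagonal in a local product basis), not merely separable. The paper's route is shorter once the Horodecki criterion is taken as given and generalises more readily to arguments where an explicit decomposition is not available. Your remark that one cannot simply identify $\rho$ with the product of its marginals is well taken -- that identification would indeed reintroduce $\sigma_i\otimes\sigma_j$ terms -- though in this instance the final mixture happens to coincide with $\rho_A\otimes\rho_B$ only when one of the Bloch vectors vanishes, so flagging the distinction is the right instinct.
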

\begin{proof}
The most general bipartite state with vanishing bipartite correlation functions is of the form:
\begin{equation}
\rho = \tfrac{1}{4} (\openone + \vec a \cdot \vec \sigma \otimes \sigma_0  + \sigma_0 \otimes \vec b \cdot \vec \sigma),
\end{equation}
where $\openone$ denotes identity operator in the space of two qubits, $|\vec a| \le 1$, and similarly $|\vec b| \le 1$.
It has eigenvalues $\frac{1}{4}(1 \pm ||\vec a|^2 \pm |\vec b|^2|)$ with all four sign combinations allowed.
The same eigenvalues are obtained after partially transposing $\rho$.
Hence all of such states are separable~\cite{PhysRevLett.77.1413,PhysLettA.223.1}.
\end{proof}

Theorem 2 does not generalize to $N>2$.
A similar code to the one used in the evidence for Conjecture~\ref{CON_CON} returned a rank-$3$ genuinely $4$-party entangled state of $4$ qubits with no $4$-partite correlation functions. 
Here we provide an analytical example of rank-$4$ state for arbitrary even $N \ge 4$, giving rise to entanglement without correlations.
Consider a mixed state 
\begin{equation}
\rho_0 = \frac{1}{4} \proj{\psi_1} + \dots + \frac{1}{4} \proj{\psi_4},
\label{EVEN_NO_CORR}
\end{equation}
which mixes the following pure states:
\begin{eqnarray}
	\ket{\psi_1} & = & \tfrac{1}{\sqrt{2}} \left( \ket{0 \dots 0} \ket{\psi} + \ket{\psi} \ket{0 \dots 0} \right), \nonumber \\
	\ket{\psi_2} & = & \tfrac{1}{\sqrt{2}} \left( \ket{1 \dots 1} \ket{\bar \psi} - \ket{\bar \psi} \ket{1 \dots 1} \right), \nonumber \\
	\ket{\psi_3} & = & |\phi \rangle | \psi \rangle,\nonumber \\
	\ket{\psi_4} & = & | \psi \rangle |\phi \rangle, \label{EVEN_STATE}
\end{eqnarray}
where we take the $\ket{\psi}$ state and its anti-state $\ket{\bar \psi}$ as generalised $W$ states of $N/2$ qubits
\begin{eqnarray*}
\ket{\psi} & = & \alpha_1 \ket{10\dots0} + \alpha_2 \ket{01\dots 0} + \dots + \alpha_{N/2} \ket{00 \dots 1}, \\
\ket{\bar \psi} & = & \alpha_1 \ket{01\dots1} + \alpha_2 \ket{10\dots 1} + \dots + \alpha_{N/2} \ket{11 \dots 0}.\label{EQ_psi1234}
\end{eqnarray*}
It is assumed that all the coefficients are real and strictly positive, i.e. $\alpha_n > 0$.
The state $\ket{\phi}$ is any product state containing an odd number of excitations, i.e. ones.
To show that $\rho_0$ is genuinely multipartite entangled, one may attempt to seek suitable entanglement witnesses. 
Here we present a much simpler approach. 
For that, we need the following theorem. 

\begin{theorem}
If a state $\rho$ lies in the subspace spanned by $\left\{\ket{\psi_1},\ket{\psi_2},\ket{\psi_3},\ket{\psi_4}\right\}$ given in Eq. \eqref{EVEN_STATE} and $\rho$ is bi-separable, then $\rho$ is orthogonal to $\ket{\psi_1}$, i.e.
\begin{align}
	\Tr\left(\rho\ket{\psi_1}\bra{\psi_1}\right)=0.
\end{align}
\label{SUBSPACE_TH}
\end{theorem}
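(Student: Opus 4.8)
The plan is to invoke the criterion quoted in the preliminaries: a state is genuinely $N$-partite entangled once its support contains no bi-product state. Concretely, if $\rho$ is bi-separable then it admits a decomposition into pure bi-product states, $\rho=\sum_k q_k \proj{\chi_k}$, where each $\ket{\chi_k}$ factorizes across some bipartition $S:S^c$ (obtained by diagonalizing each local factor $\rho^j_{A_j}$, $\rho^j_{B_j}$). Each such $\ket{\chi_k}$ lies in the range of $\rho$, and since $\rho$ is supported in $V:=\mathrm{span}\{\ket{\psi_1},\dots,\ket{\psi_4}\}$, every $\ket{\chi_k}\in V$. Because $\Tr(\rho\proj{\psi_1})=\sum_k q_k |\braket{\psi_1}{\chi_k}|^2$, it suffices to prove the purely linear-algebraic statement: every bi-product state lying in $V$ is orthogonal to $\ket{\psi_1}$.

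To establish this I would grade the Hilbert space by excitation number (the Hamming weight, i.e.\ the number of ones). Two bookkeeping facts about $V$ drive the argument. First, $\ket{\psi_1}$ spans the entire single-excitation part of $V$: its basis components all carry exactly one excitation, whereas $\ket{\psi_2}$ carries $N-1$ excitations and $\ket{\psi_3},\ket{\psi_4}$ carry $k+1$ excitations, with $k$ the odd number of ones in $\ket{\phi}$; for even $N\ge 4$ none of these equals $1$, so projecting $V$ onto the single-excitation sector yields precisely $\mathrm{span}\{\ket{\psi_1}\}$. Moreover $\ket{\psi_1}$ is a genuine $W$-state carrying a nonzero single-excitation amplitude proportional to some $\alpha_n>0$ at \emph{every} one of the $N$ qubits, so it has full single-qubit support. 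Second, $V$ contains no vacuum component, $\braket{0\dots0}{v}=0$ for all $v\in V$, because each $\ket{\psi_i}$ has excitation number at least one.

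Now take a bi-product state $\ket{\chi}=\ket{A}_S\otimes\ket{B}_{S^c}\in V$ across a genuine bipartition and suppose $\braket{\psi_1}{\chi}\ne 0$. Projecting onto the single-excitation sector and using the one-dimensionality above, the projection equals $\braket{\psi_1}{\chi}\,\ket{\psi_1}$, hence has full support over all $N$ qubits. Writing the vacuum amplitudes $a_0=\braket{0_S}{A}$ and $b_0=\braket{0_{S^c}}{B}$, the single-excitation part of $\ket{\chi}$ is $b_0\,\ket{A^{(1)}}\ket{0_{S^c}}+a_0\,\ket{0_S}\ket{B^{(1)}}$, where $\ket{A^{(1)}},\ket{B^{(1)}}$ denote the single-excitation parts of the factors. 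The first term carries excitations only inside $S$ and the second only inside $S^c$; matching $\ket{\psi_1}$, which has nonzero amplitude on qubits in both $S$ and $S^c$, therefore forces $a_0\ne 0$ and $b_0\ne 0$. But then $\ket{\chi}$ has vacuum amplitude $\braket{0\dots0}{\chi}=a_0 b_0\ne 0$, contradicting the absence of a vacuum component in $V$. Hence $\braket{\psi_1}{\chi}=0$, which with the opening reduction yields the theorem.

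I expect the main obstacle to be securing the two grading facts rigorously, especially that $\ket{\psi_1}$ is the \emph{only} single-excitation direction in $V$ and that it has full single-qubit support: this is exactly where the parity of $N$ and the oddness of the excitation number of $\ket{\phi}$ are needed, and one must verify that the excitation counts place $\ket{\psi_2},\ket{\psi_3},\ket{\psi_4}$ entirely outside both the single-excitation and the vacuum sectors. Once that combinatorial bookkeeping is in place, the contradiction through the forced nonzero vacuum amplitude $a_0 b_0$ is immediate.
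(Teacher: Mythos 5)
Your proposal is correct and follows essentially the same route as the paper's proof: both arguments rest on (i) the subspace being orthogonal to the vacuum, which forces $a_0 b_0 = 0$ for any bi-product state, and (ii) the fact that $\ket{\psi_1}$, having strictly positive single-excitation amplitude on every qubit, straddles any bipartition, which forces $a_0 \neq 0$ and $b_0 \neq 0$ whenever the overlap with $\ket{\psi_1}$ is nonzero. The paper extracts (ii) by taking explicit inner products with $\bra{0\dots0}_A\bra{10\dots0}_B$ and $\bra{10\dots0}_A\bra{00\dots0}_B$ and multiplying the resulting equations, whereas you phrase it as a projection onto the single-excitation sector; this is a cosmetic rather than substantive difference.
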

\begin{proof}
We first prove that all bi-product pure states in this subspace are orthogonal to $\ket{\psi_1}$.
Suppose there exists a bi-product state $\ket{\xi}_A \ket{\eta}_B \in \mathrm{span}\left\{\ket{\psi_1},\dots,\ket{\psi_4}\right\}$, i.e. 
\begin{equation}
\label{contr}
\ket{\xi}_A \ket{\eta}_B = c_1 \ket{\psi_1} + c_2 \ket{\psi_2} + c_3 \ket{\psi_3} + c_4 \ket{\psi_4}.
\end{equation} 
Here $A, B$ form an \emph{arbitrary} biparition of all $N$ qubits.
We emphasise that say $A$ contains any subset of qubits, not even neighbouring ones.
Denote $a_0 = \braket{00\dots0}{\xi}_A$ and $b_0=\braket{00\dots0}{\eta}_B$.
Since the discussed subspace is orthogonal to the $\ket{0\dots0}$ state of all $N$ qubits, by taking the inner product with both sides of Eq. \eqref{contr}, we conclude that
\begin{equation}
a_0 b_0 = 0.
\label{A0B0}
\end{equation}
Now consider the vector $\bra{00\dots0}_A\bra{10\dots0}_B$, where the excitation $1$ is in the first qubit of the subsystem $B$.
Since this vector has only one excitation, it is orthogonal to $\ket{\psi_2}$ (which has $N-1$ excitations) and both $\ket{\psi_3}$ and $\ket{\psi_4}$ (they have an even number of excitations).
The inner product with both sides of Eq. \eqref{contr} gives
\begin{equation}
a_0 b_1 = c_1 \, \alpha_k, 
\label{A0B1}
\end{equation}
where $b_1=\braket{10\dots0}{\eta}_B$ and index $k$ depends on which bipartition is chosen.
For example, if $A$ contains first half of the qubits, then comparison with (\ref{EVEN_STATE}) shows that $k=1$, or if $A$ contains first $N-1$ qubits, then the same analysis reveals that $k=N$.
Since we assume that all $\alpha_n > 0$, the state $\ket{\psi_1}$ is a superposition of one excitation on every qubit and hence for arbitrary bipartition there exists index $k$ such that (\ref{A0B1}) holds.
Similarly, by taking the inner product with the vector $\bra{10\dots0}_A\bra{00\dots0}_B$, we obtain 
\begin{equation}
a_1 b_0 = c_1 \, \alpha_l,
\label{A1B0}
\end{equation}
where  $a_1=\braket{10\dots0}{\xi}_A$ and $\alpha_l$ is the suitable coefficient of $\ket{\psi_1}$.
Multiplying (\ref{A0B1}) by (\ref{A1B0}) shows that $a_0 b_0 a_1 b_1=c_1^2 \alpha_k\alpha_l$. 
The left hand side of this equation is zero, as we showed in \eqref{A0B0}. 
Since both $\alpha_k$ and $\alpha_l$ are strictly positive numbers, we conclude that $c_1=0$.
In words, all bi-product states in the discussed subspace are orthogonal to $\ket{\psi_1}$.
Hence, arbitrary mixture of such states is also orthogonal to $\ket{\psi_1}$ and the theorem follows.
\end{proof}
According to this theorem, if a general state $\rho$ has non-zero overlap with $\ket{\psi_1}$, then either $\rho$ is genuinely $N$-qubit entangled or it does \emph{not} belong to the subspace spanned by $\{ \ket{\psi_1},\dots,\ket{\psi_4} \}$ or both.
Since the state $\rho_0$ presented in Eq.~\eqref{EVEN_NO_CORR} clearly belongs to this subspace, it has to be genuinely $N$-qubit entangled.
Furthermore, we prove in the Appendix that this state has no $N$-partite correlation functions.
This concludes construction of `entanglement without correlations' for any even $N \ge 4$.

The construction just given also sheds light on the kind of operations required to produce an anti-state.
In particular, one could consider a mixed state $\rho = \frac{1}{2} \proj{\psi_3} + \frac{1}{2} \proj{\psi_4}$, which is clearly bi-separable.
By our construction, its anti-state is $\bar \rho = \frac{1}{2} \proj{\psi_1} + \frac{1}{2} \proj{\psi_2}$, which is genuinely $N$-partite entangled.
Hence, at least some of the anti-states cannot be obtained by local operations and classical communication as this class of maps is not capable of producing entanglement.

\subsection{Violation of local realism}
\label{SEC_BELL}

Another remarkable property of states in Eq.~(\ref{EVEN_NO_CORR}) is their ability to violate a Bell inequality.
The lack of $N$-partite correlation functions makes many standard tools inapplicable to these states.
This was first pointed out in~\cite{PhysRevLett.101.070502} and only recently suitable Bell inequalities were found~\cite{PhysRevA.86.032105,Us2015}
and were experimentally implemented to test the no-correlation states of odd number of qubits~\cite{Us2015}.
We present now a Bell-type inequality which is violated by appropriate quantum measurements on states (\ref{EVEN_NO_CORR}) for arbitrary even $N$.

Consider the following Bell-type inequality introduced in~\cite{JPhysA.48.465301}:
\begin{equation}
0 \leq P(+ \dots + | A_1 \dots A_{N-2}) \textrm{CH}_{N-1,N}^{+\dots+}, 
\label{Wineq}
\end{equation}
where $P(+ \dots + | A_1 \dots A_{N-2})$ is the probability that the first $N-2$ parties all detect $+1$ outcomes when they measure observables $A_1, \dots, A_{N-2}$, respectively;
$\textrm{CH}_{N-1,N}^{+\dots+}$ denotes the Clauser-Horne expression~\cite{CH} between the last two parties, 
which is calculated in the subensemble of experiments in which the first $N-2$ observers all obtain $+1$.

For simplicity let us choose $|\psi \rangle$ as the symmetric W state of $N/2$ qubits, i.e. all $\alpha_n = 1/\sqrt{N/2}$.
In order to demonstrate a violation of (\ref{Wineq}), each of the first $N-2$ observers performs measurement $A_n = \sigma_z$.
Therefore, $P(+ \dots + | A_1 \dots A_{N-2}) = \frac{1}{4} \frac{2}{N}$ with the sole contribution from the state $\ket{\psi_1}$.
In the subensemble where all these $N-2$ results are $+1$ the state of the last two qubits collapses to $\frac{1}{\sqrt{2}}(|01\rangle + |10\rangle)$.
The last two observers perform measurements that lead to the maximal violation of the CH inequality given by $-\frac{\sqrt{2}-1}{2}$. 
Finally, the right-hand-side of (\ref{Wineq}) is equal to $- \frac{\sqrt{2}-1}{4 N}$, which violates the lower bound $0$.
We also verified, using the software described in~\cite{PhysRevA.82.012118}, that the above inequality is optimal
is the sense that it is violated for the highest admixture of white noise to the state $\rho$.

We note that additionally to fundamental interest this also demonstrates practical applicability of states (\ref{EVEN_NO_CORR}).
It is well-known that such states reduce communication complexity, improve security of cryptographic key distribution or enable device-independent protocols~\cite{RevModPhys.86.419}.


\section{N odd}
\label{odd}

Ref.~\cite{Us2015} demonstrated a continuous family of mixed states which are genuinely tripartite entangled and give rise to vanishing tripartite correlation functions.
In this section we will extend this example to a larger family of states described by exponentially many, in $N$, parameters.
This example will then be shown to elucidate limits on entanglement detection with bipartite correlation functions only, such as those discussed in~\cite{PhysRevA.72.022340,JOptSocAmB.24.275,PhysRevA.86.042339,PhysRevA.87.034301,Science.344.1256,JPhysA.47.424024}.

Consider the family of generalised Dicke states of $N$ qubits:
\begin{align}
	\ket{D^e_N} = \sum_{\mathcal{P}}\alpha_{\mathcal{P}(1\dots10\dots0)} | \mathcal{P}(\underbrace{1\dots 1}_{e} \underbrace{0\dots 0}_{N-e}) \rangle,
\end{align}
where the sum is over all permutations of $e$ excitations, i.e., in every term in superposition we have $e$ ones and $N-e$ zeros.
We assume that all the coefficients are strictly positive and we shall collectively denote them by $\alpha_\mathcal{P}$, i.e., we take $\alpha_\mathcal{P} > 0$.
Note that the highest  number of terms in the superposition is obtained for $e = (N-1)/2$ (recall that $N$ is odd) and according to the Stirling approximation it scales as $2^N/\sqrt{N}$.
We show that for all these exponentially many continuous parameters, the following even mixture
\begin{align}
	\rho = \frac12\proj{D^e_N} + \frac12\proj{D^{N-e}_N},
	\label{EQ_DICKE_NOCORR}
\end{align}
has vanishing all $N$-partite correlation functions and simultaneously it is genuinely $N$-partite entangled.

The former statement follows immediately from the results in~\cite{Us2015}.
Namely, one verifies that $| D^{N-e}_N \rangle$ is the anti-state to the generalised Dicke state $| D^{e}_N \rangle$.
Any state of odd number of qubits equally mixed with its anti-state has no $N$-partite correlation functions.
The following theorem proves genuine multipartite entanglement.
\begin{theorem}
\label{TH_Dicke}
For all $\alpha_{\mathcal{P}} > 0$ the state~\eqref{EQ_DICKE_NOCORR} is genuinely $N$-partite entangled.
\end{theorem}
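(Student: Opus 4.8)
The plan is to use the support criterion stated in the section on genuine multipartite entanglement: it suffices to show that the support of $\rho$ contains no bi-product state. Since $N$ is odd we have $e \neq N-e$, so $\ket{D^e_N}$ and $\ket{D^{N-e}_N}$ lie in different total-excitation sectors, are therefore orthogonal and linearly independent, and the support of $\rho$ is exactly $V = \mathrm{span}\{\ket{D^e_N}, \ket{D^{N-e}_N}\}$. So the whole theorem reduces to: $V$ contains no bi-product state across any bipartition $A:B$ of the $N$ qubits.

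First I would set up the contradiction hypothesis $\ket{\xi}_A \ket{\eta}_B = c_1 \ket{D^e_N} + c_2 \ket{D^{N-e}_N}$ and expand both sides in the computational basis. The support of the left-hand side is a product set $S_A \times S_B$ (all $\ket{s}_A\ket{t}_B$ with $\xi_s \eta_t \neq 0$), while the right-hand side is supported only on basis states of total excitation $e$ or $N-e$. Writing $P_A, P_B$ for the sets of excitation numbers occurring in $S_A, S_B$, the matching condition becomes the purely combinatorial requirement $p+q \in \{e, N-e\}$ for every $p \in P_A$ and $q \in P_B$. An elementary count (three distinct sums cannot fit in a two-element set) shows this forces, up to swapping $A \leftrightarrow B$ and assuming $e<N-e$, either (i) $|P_A|=|P_B|=1$, so the product lies in a single excitation sector, or (ii) $|P_A|=2$ and $|P_B|=1$, with the two $A$-excitation numbers differing by $N-2e$. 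In both cases I would project onto the total-excitation-$e$ sector: in case (i) this gives either the whole product state or zero, and in case (ii) it gives $\xi^{(p_{\min})}_A \otimes \ket{\eta}_B$, which is still a genuine product. Because this projection must equal $c_1\ket{D^e_N}$, a nonzero $c_1$ would force $\ket{D^e_N}$ to be a product state across $A:B$, whereas $c_1=0$ would make the projection vanish, contradicting $p_{\min}\in P_A$; the analogous statement in the $N-e$ sector handles $c_2$. Either way we reach a contradiction once we know the Dicke states are entangled across the cut.

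Everything therefore hinges on the following lemma, which I regard as the heart of the proof. For $1\le e \le N-1$ and all coefficients strictly positive, $\ket{D^e_N}$ is entangled across every bipartition $A:B$ with both parts nonempty. I would prove it by forming the coefficient matrix $M_{st}=\alpha_{st}$ (nonzero only when $|s|+|t|=e$) across the cut and grouping rows and columns by excitation number; this makes $M$ block-diagonal, one block $M_j$ for each admissible number $j$ of excitations placed in $A$, with $j$ ranging over $\max(0,e-|B|)\le j\le \min(|A|,e)$. A short case analysis shows that for $1\le e\le N-1$ and nonempty $A,B$ the collapse of this range to a single value would force $e=0$, $e=N$, $|A|=0$, or $|B|=0$, all excluded; hence at least two admissible $j$ occur. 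Each admissible block is nonzero because all $\alpha_{st}>0$, so $M$ contains at least two nonzero blocks on disjoint rows and columns, giving $\mathrm{rank}\,M\ge 2$ and Schmidt rank exceeding one.

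I expect the main obstacle to be the bookkeeping in this lemma, namely confirming that the admissible-$j$ window never degenerates to a single index; the combinatorial reduction on $P_A,P_B$ and the sector-projection step are both routine by comparison. Once the lemma is established, the support argument assembles immediately and yields genuine $N$-partite entanglement for all $\alpha_\mathcal{P}>0$.
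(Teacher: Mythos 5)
Your proposal is correct, but it takes a genuinely different route from the paper. The paper also reduces the theorem to showing that $\mathrm{span}\{\ket{D^e_N},\ket{D^{N-e}_N}\}$ contains no bi-product state, but it does so via correlation functions: a bi-product state across $A:B$ must satisfy $T_{0\dots0xx0\dots0}\,T_{0\dots0yy0\dots0}=T_{0\dots0xy0\dots0}\,T_{0\dots0yx0\dots0}$ for one qubit on each side of the cut; the cross terms between the two Dicke states vanish because they have opposite excitation parity ($N$ odd), so every pure state in the span inherits the bipartite correlations of $\ket{D^e_N}$, for which $T_{xx}=T_{yy}=\sum_{\mathcal P}\alpha_{\mathcal P 01}\alpha_{\mathcal P 10}>0$ while $T_{xy}=T_{yx}=0$ --- a contradiction. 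You instead work directly in the computational basis: the product-support condition $p+q\in\{e,N-e\}$ for all $p\in P_A$, $q\in P_B$, the sector projection, and the Schmidt-rank lemma showing the coefficient matrix of $\ket{D^e_N}$ has at least two nonzero excitation blocks. Your bookkeeping is sound (the admissible-$j$ window $\max(0,e-|B|)\le j\le\min(|A|,e)$ indeed never collapses for $1\le e\le N-1$ and nonempty $A,B$). What each approach buys: the paper's argument is shorter and feeds directly into its later discussion of entanglement detection with bipartite correlations, since it isolates exactly which vanishing correlation functions carry the entanglement information; yours is more elementary, establishes the stronger standalone fact that every generalized Dicke state with positive coefficients is entangled across every cut, and does not use the parity cancellation, so it extends verbatim to even $N$ with $e\ne N/2$. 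One point worth making explicit in either write-up: the claim silently requires $1\le e\le N-1$ (for $e=0$ the mixture is manifestly separable), which your lemma states but your opening reduction does not.
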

\begin{proof}
We shall prove that no bi-product state exists in the subspace spanned by $\{ \ket{D^e_N}, | D^{N-e}_N \rangle \}$ if all $\alpha_{\mathcal{P}} > 0$.
The following simple observation will be utilised:
correlation functions of a bi-product state across $A:B$ partition satisfy
\begin{equation}
T_{0\dots0xx0\dots0} T_{0\dots0yy0\dots0} = T_{0\dots0xy0\dots0} T_{0\dots0yx0\dots0},
\label{EQ_BISEP}
\end{equation}
where the first non-zero index is for the last particle in $A$, and the second non-zero index is for the first particle in $B$.
We now prove that this condition is not satisfied by any pure state in the considered subspace.
Hence it contains no bi-product states and it follows that also all the mixed states with this support are genuinely $N$-partite entangled.

An arbitrary pure state in the considered subspace can be written as
\begin{align}
	\ket{\phi} = a \ket{D^e_N} + b \ket{D^{N-e}_N},
\end{align}
where $a$ and $b$ are normalised complex coefficients.
Without loss of generality we focus on the correlation functions between the last two particles:
\begin{eqnarray}
	T_{0\dots0jk} & = & |a|^2 T_{0\dots0jk}(D^e_N) + |b|^2 T_{0\dots0jk}(D^{N-e}_N) \label{EQ_Tjk} \\
	& + & a^* b\bra{D^e_N} \sigma_0 \otimes \dots \otimes \sigma_0 \otimes \sigma_j \otimes \sigma_k \ket{D^{N-e}_N} \nonumber\\
	& + & a b^*\bra{D^{N-e}_N} \sigma_0 \otimes \dots \otimes \sigma_0 \otimes \sigma_j \otimes \sigma_k \ket{D^{e}_N}, \nonumber 
\end{eqnarray}
for $j,k = x,y$. 
Note that applying $\sigma_j \otimes \sigma_k$ to the states $\ket{D^e_N}$ and $\ket{D^{N-e}_N}$ does not change their excitation parity. 
Since $N$ is odd, $\ket{D^e_N}$ and $\ket{D^{N-e}_N}$ have opposite excitation parity.
Thus the last two terms in  \eqref{EQ_Tjk} vanish.
Furthermore, the bipartite correlation functions of the anti-state to the generalised Dicke state are the same as in the original state.
We conclude that the bipartite correlations of any state $\ket{\phi}$ are the same as those of the generalised Dicke state $\ket{D^e_N}$.
One now readily verifies that for the generalised Dicke state we have:
\begin{eqnarray}
T_{0\dots 0 xy} & = & T_{0\dots 0 yx} = 0, \label{EQ_DICKE_VANISH} \\
T_{0\dots 0 xx} & = & T_{0\dots 0 yy} = \sum_{\mathcal{P}} \alpha_{\mathcal{P}(1\dots10\dots0)01} \alpha_{\mathcal{P}(1\dots10\dots0)10}, \nonumber
\end{eqnarray}
where the sum is over all permutations of $e-1$ excitations on $N-2$ positions.
Since all $\alpha_{\mathcal{P}} > 0$, Eq.~\eqref{EQ_BISEP} is never satisfied. 
The same argument holds for arbitrary partitions $A:B$.
 \end{proof}

 \subsection{Limits on entanglement witnesses based on bipartite correlations}
 
Note that the proof of genuine $N$-partite entanglement of the state in Eq.~\eqref{EQ_DICKE_NOCORR} uses solely its bipartite correlation functions. 
Furthermore, it relies on the fact that some of these correlations vanish, as in Eq.~\eqref{EQ_DICKE_VANISH}. 
Naturally one would wonder if it is possible to conclude the genuine multipartite entanglement from only non-zero bipartite correlation functions. 
This is important especially in view of entanglement witnesses which are combinations of correlation functions and therefore are insensitive to the vanishing correlation functions~\cite{PhysRevA.72.022340,JOptSocAmB.24.275,PhysRevA.86.042339,PhysRevA.87.034301,Science.344.1256,JPhysA.47.424024}.

We now show that in general the vanishing bipartite correlation functions are important for revealing of genuine $N$-partite entanglement.
Without taking them into account even entanglement of some manifestly genuinely $N$-partite entangled Dicke states is not detectable.
The Dicke state with $e$ excitations is defined by all the coefficients $\alpha_{\mathcal{P}} = 1/\sqrt{{N \choose e}}$.
It is a permutation-invariant state with the following non-vanishing bipartite correlation functions:
\begin{align}
	&T_{\mathcal{P}(xx0\dots0)} =T_{\mathcal{P}(yy0\dots0)} = \frac{2}{\binom{N}{e}}\binom{N-2}{e-1},\\
	&T_{\mathcal{P}(zz0\dots0)} = \frac{1}{\binom{N}{e}}\left\{\binom{N-2}{e}+\binom{N-2}{e-2}-2\binom{N-2}{e-1}\right\}.
\end{align}
Using the property of the binomial coefficients, $\binom{n-1}{k-1}+\binom{n-1}{k} =\binom{n}{k}$, one verifies that
\begin{align}
	T_{xx0\dots0}+T_{yy0\dots0}+T_{zz0\dots0} = 1. \label{EQ_SUMT}
\end{align}
Therefore, as long as the correlation functions in \eqref{EQ_SUMT} are non-negative, we can always construct a pure single-qubit state $\ket{\phi}$, 
with Bloch vector $(\sqrt{T_{xx0\dots0}},\sqrt{T_{yy0\dots0}},\sqrt{T_{zz0\dots0}})$, so that the tensor product $\ket{\phi}\otimes\dots\otimes\ket{\phi}$ mimics all the non-zero bipartite correlations of the Dicke state.
The non-negativity of all the terms in \eqref{EQ_SUMT} is satisfied for
\begin{align}
N \ge \left \lceil \tfrac{1}{2}(1+4e + \sqrt{1+8e}) \right \rceil,
	\label{EQ_POS}
\end{align}
where $\left \lceil x \right \rceil$ denotes the smallest integer greater or equal $x$.
For example, the non-zero bipartite correlation functions of the $\ket{W}$ state, i.e. Dicke state with $e=1$, are compatible with the correlation functions of the product state for all $N \ge 4$, hence practically for all the $\ket{W}$ states.
For such states the non-zero bipartite correlations alone are not able to reveal genuine $N$-partite entanglement. 
However, when combined with the vanishing bipartite correlations a suitable proof may be found as we illustrated above.


\section{General N}
\label{general}

We would like to present here an observation which in a simple way characterises all known facts about existence of anti-states for both $N$ even and odd.
It provides yet another piece of evidence that arbitrary genuinely $N$-partite entangled pure state of even number of qubits does not admit an anti-state.

Consider a state $\ket{\psi}$ endowed with correlation tensor $T_{j_1 \dots j_N}$. 
Recall that its anti-state is defined by having correlation tensor elements given by $ - T_{j_1 \dots j_N}$, for all indices $j_n = x,y,z$.
One way of obtaining an anti-state would be to apply on an odd number of qubits a local operation which maps
\begin{equation}
\vec x \to - \vec x, \qquad \vec y \to - \vec y, \qquad \vec z \to - \vec z.
\label{UNOT}
\end{equation}
However, it is well-known that such a local operation, called universal-not gate~\cite{JModOpt.47.211}, is not present within quantum formalism as it is anti-unitary.
On the level of multiple qubits one can to some degree overcome this restriction.
Namely, note that mathematically one obtains \eqref{UNOT} by applying $\sigma_y$ operation and partial transposition.
The effect of $\sigma_y$ is to invert $\vec x \to - \vec x$ and $\vec z \to - \vec z$, and the effect of partial transposition is to flip the remaining axis $\vec y \to - \vec y$.
If partial transposition is applied on a subsystem $A$ of a pure state entangled across $A:B$ it results in a matrix with negative eigenvalues~\cite{PhysRevLett.77.1413,PhysLettA.223.1}.
Hence, this method leads to a physically meaningful anti-state only for original states with odd total number of qubits (as applying partial transposition on every individual qubit results in a transposition, which is a completely positive map)
or having a subsystem $A$ with an odd number of qubits in a product state.
For example, by applying $\sigma_y$ and partial transposition on every single qubit or by taking $A$ as the first qubit this procedure will produce an anti-state to three-qubit $\ket{0}\ket{\psi^-}$ state, 
but no anti-state to four-qubit $\ket{\psi^-} \ket{\psi^-}$ state, and indeed any genuinely multipartite entangled pure state of even number of qubits.
Of course global operations may exist that produce anti-states in a completely different way, but nevertheless it is appealing that this simple procedure recovers all that is known about anti-states at present date.

\subsection{Impossibility of inverting all correlation functions between even number of observers}

We would like to finish with one more observation contrasting even and odd lower order correlation functions in an anti-state for general $N$.
The anti-states constructed in Ref.~\cite{Us2015} have opposite correlation functions between arbitrary odd number of observers, as compared to the original state.
The correlation functions between an arbitrary even number of observers are the same as in the original state.
In contrast, there is no state in which all the correlation functions between arbitrary even number of observers are opposite.

\begin{theorem}
Any pure state $\ket{\psi}$ of $N$ qubits does not admit state $\ket{\psi'}$ in which all the $k$-partite correlation functions, for all even $k$, are opposite.
\end{theorem}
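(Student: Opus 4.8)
The plan is to exploit the fact that $k=0$ is itself an even number, so the hypothesis necessarily constrains the zeroth-order correlation function as well. Setting every index equal to $\sigma_0$ yields the $0$-partite correlation $T_{0\dots0}(\rho) = \Tr(\rho\,\sigma_0 \otimes \dots \otimes \sigma_0) = \Tr(\rho)$, which equals $1$ for any normalised quantum state. This is the one even-order correlation function whose value is fixed independently of the state, and it is the crux of the argument.

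First I would note that, since $0$ is even, any candidate state $\ket{\psi'}$ whose even $k$-partite correlations are all opposite to those of $\ket{\psi}$ must in particular satisfy $T_{0\dots0}(\psi') = -\,T_{0\dots0}(\psi)$. Evaluating both sides, the right-hand side is $-\Tr(\proj{\psi}) = -1$, whereas the left-hand side is $\Tr(\proj{\psi'}) = +1$ by normalisation of $\ket{\psi'}$. The resulting identity $1 = -1$ is absurd, so no such state can exist, whether pure or mixed.

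I do not expect any genuine obstacle here; the entire content is the recognition that the normalisation condition is itself an even-order (indeed zeroth-order) correlation, and therefore can never be inverted. I would close by remarking that this is precisely why the odd-correlation-inverting anti-states of Ref.~\cite{Us2015} are permitted---since $0$ is not odd, the trace is left untouched---while inverting \emph{all} even correlations is obstructed from the very start by the requirement $\Tr(\rho)=1$. This observation neatly mirrors the parity asymmetry running through the rest of the paper.
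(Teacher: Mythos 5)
Your argument rests entirely on counting $k=0$ among the admissible even values of $k$, so that the ``zero-partite correlation function'' $T_{0\dots0}=\Tr\rho=1$ would have to be inverted. That is not what the theorem is about. The statement paraphrases the preceding sentence of the paper, ``there is no state in which all the correlation functions between arbitrary even number of observers are opposite'': a correlation function between an even number of observers means $k=2,4,\dots$ actual measurements, and the normalisation $\Tr\rho=1$ is not a correlation function between any observers. Under your reading the theorem becomes a vacuous statement about normalisation (and the contrast the paper draws with the odd-$k$ case, where such inverting states \emph{do} exist, would lose all meaning); under the intended reading your proof establishes nothing, because it places no constraint whatsoever on the correlations between $2,4,\dots,N$ observers, which is where the entire difficulty lies.

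The paper's proof does the real work and is of a completely different character. Assuming the hypothetical $\ket{\psi'}$ exists, one applies to it the anti-state construction of Ref.~\cite{Us2015}, which flips all correlation functions between an \emph{odd} number of observers while leaving those between an even number unchanged. The resulting pure state $\ket{\bar{\psi'}}$ then has every correlation function with $k\ge 1$ opposite to that of $\ket{\psi}$, so the even mixture $\rho=\frac12\proj{\psi}+\frac12\proj{\bar{\psi'}}$ has no correlations at all, including all single-qubit expectation values, and hence must equal the white noise $\mathbb{I}/2^N$. This is impossible because that mixture has rank at most $2$ while white noise has rank $2^N$. To repair your proposal you would need an argument of this kind --- one that actually engages with the $k\ge 2$ correlations rather than with the trace.
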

\begin{proof}
By contradiction.
Let us build an anti-state to the hypothetical state $\ket{\psi'}$ according to prescription of Ref.~\cite{Us2015}.
Denote it $\ket{\bar{\psi'}}$ and note that it has opposite all the correlation functions between even and odd number of observers, as compared to the original state $\ket{\psi}$.
Therefore the even mixture 
\begin{equation}
\rho=\frac 1 2 \proj{\psi}+\frac 1 2 \proj{\bar{\psi'}},
\end{equation}
has no correlations whatsoever, including expectation values of local observables, i.e. $\rho$ is a white noise $\mathbb{I}/2^N$.
However, this is not possible since the rank of $\rho$ is $2$, while the white noise must have rank $2^N$.
\end{proof}

\section{Conclusions}

We provided non-trivial examples of genuinely multiparty entangled states of even number $N$ of qubits that simultaneously have vanishing $N$-partite correlation functions.
We showed that they violate suitable Bell-type inequalities.
The states have rank $4$ and rank $3$, respectively, and we gave a compelling evidence supporting the conjecture that rank $2$ examples do not exist.
This is in contrast to multipartite systems with odd number of qubits and explains why only such cases were considered up to date.
We also extended previously known examples using techniques that among others show limits to entanglement detection with bipartite correlation functions only.

The states discussed here opened a debate on rigorous quantification of genuine multipartite classical and quantum correlations that led to the formulation of natural postulates such quantifiers should satisfy~\cite{PhysRevA.83.012312}.
We hope that the examples provided here will be a useful testbed for candidate identifiers and will help to find computable measures that will enable a deeper analysis of multipartite experiments.

\section{Acknowledgments}

This work is supported by the Singapore Ministry of Education Academic Research Fund Tier 2 project MOE2015-T2-2-034 and NCN Grant No. 2014/14/M/ST2/00818.
We acknowledge the support of this work by the EU (ERC QOLAPS). 
L.K. thanks the international PhD programme ExQM from the Elite Network of Bavaria for support.
M.C.T acknowledges support from the NSF-funded Physics Frontier Center at the JQI. 
\section{Appendix}

\begin{theorem}
All $N$-partite correlation functions of the state in Eq. (\ref{EVEN_NO_CORR}) vanish.
\end{theorem}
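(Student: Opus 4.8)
The plan is to expand the quantity of interest as $T_{j_1\dots j_N}(\rho_0)=\frac14\sum_{k=1}^4\bra{\psi_k}O\ket{\psi_k}$, with $O=\sigma_{j_1}\otimes\dots\otimes\sigma_{j_N}$ and $j_n\in\{x,y,z\}$, and to show that the four diagonal matrix elements cancel. First I would set up two selection rules that eliminate almost all operators $O$. Let $n_x,n_y,n_z$ count the $\sigma_x,\sigma_y,\sigma_z$ factors. Because every $\ket{\psi_k}$ has real coefficients and $\sigma_y^{T}=-\sigma_y$ while $\sigma_x,\sigma_z$ are symmetric, we have $O^{T}=(-1)^{n_y}O$, so $\bra{\psi_k}O\ket{\psi_k}=(-1)^{n_y}\bra{\psi_k}O\ket{\psi_k}$ vanishes whenever $n_y$ is odd. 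Similarly, each $\ket{\psi_k}$ has a definite excitation-number parity ($\ket{\psi_1},\ket{\psi_2}$ odd, with $1$ and $N-1$ excitations; $\ket{\psi_3},\ket{\psi_4}$ even), and since $\sigma_x,\sigma_y$ flip and $\sigma_z$ preserves a qubit's excitation, $O$ changes that parity by $(-1)^{n_x+n_y}$, killing the matrix element unless $n_x+n_y$ is even. Together these force both $n_x$ and $n_y$ to be even, so the only case left to treat is $n_x,n_y$ even.

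For that case I would pair the states as $\{\ket{\psi_1},\ket{\psi_2}\}$ against $\{\ket{\psi_3},\ket{\psi_4}\}$ and show the two pairs contribute with opposite sign. Splitting the qubits into the first and last $N/2$, write $O=O_1\otimes O_2$. Using $\ket{\bar\psi}=\sigma_x^{\otimes N/2}\ket\psi$ and $\ket{1\dots1}=\sigma_x^{\otimes N/2}\ket{0\dots0}$, the antisymmetric form of $\ket{\psi_2}$ becomes $\ket{\psi_2}=\sigma_x^{\otimes N}\ket{\psi_1^-}$ with $\ket{\psi_1^-}=\frac1{\sqrt2}(\ket{0\dots0}\ket\psi-\ket\psi\ket{0\dots0})$. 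Conjugation by $\sigma_x^{\otimes N}$ multiplies $O$ by $(-1)^{n_y+n_z}=(-1)^{N-n_x}=+1$ — this is exactly where evenness of $N$ enters — so $\bra{\psi_2}O\ket{\psi_2}=\bra{\psi_1^-}O\ket{\psi_1^-}$. Adding the symmetric and antisymmetric forms cancels the cross terms and leaves $\bra{\psi_1}O\ket{\psi_1}+\bra{\psi_2}O\ket{\psi_2}=\bra{0\dots0}O_1\ket{0\dots0}\bra\psi O_2\ket\psi+\bra\psi O_1\ket\psi\bra{0\dots0}O_2\ket{0\dots0}$, whereas directly $\bra{\psi_3}O\ket{\psi_3}+\bra{\psi_4}O\ket{\psi_4}=\bra\phi O_1\ket\phi\bra\psi O_2\ket\psi+\bra\psi O_1\ket\psi\bra\phi O_2\ket\phi$.

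The last ingredient is the elementary fact that for a computational-basis state $\ket b$ on $N/2$ qubits and any Pauli string $P$, the matrix element $\bra b P\ket b$ vanishes unless $P$ contains only $\sigma_z$ (and identity) factors, in which case it equals $(-1)^{w(b)}$ with $w(b)$ the Hamming weight. Since $\ket\phi$ has an odd number of excitations, this gives $\bra{0\dots0}P\ket{0\dots0}=-\bra\phi P\ket\phi$ for every $P$, so the $O_1$ and $O_2$ prefactors cancel pairwise and the four contributions sum to zero. I expect the main obstacle to be the second step: correctly reducing $\bra{\psi_2}O\ket{\psi_2}$ to $\bra{\psi_1^-}O\ket{\psi_1^-}$ through the global bit-flip, and verifying that the off-diagonal cross terms really drop precisely when $n_x$ is even. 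That is where the parity of $N$ is indispensable and where the cancellation mechanism lives; the two selection rules and the final basis-state identity are comparatively routine.
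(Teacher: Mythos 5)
Your proof is correct, and its central cancellation is the same as the paper's: you rewrite $\ket{\psi_2}=\sigma_x^{\otimes N}\ket{\psi_1^-}$ so that (for $n_x$ even and $N$ even) $\bra{\psi_2}O\ket{\psi_2}=\bra{\psi_1^-}O\ket{\psi_1^-}$, the cross terms of the symmetric and antisymmetric combinations cancel, and the identity $\bra{0\dots 0}P\ket{0\dots 0}=-\bra{\phi}P\ket{\phi}$ (valid because $\ket{\phi}$ has odd weight and $P$ must be all-$\sigma_z$ to contribute) annihilates the $\{\psi_1,\psi_2\}$ block against the $\{\psi_3,\psi_4\}$ block. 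Where you genuinely differ is in disposing of the remaining index patterns: the paper splits on the parity of $n_x$ and, in the odd case, argues by hand that the surviving cross terms of $\psi_1,\psi_2$ are purely imaginary complex conjugates and that the $\psi_3,\psi_4$ contributions are imaginary hence zero; your two selection rules --- realness of all four states kills odd $n_y$, and their definite excitation parity kills odd $n_x+n_y$ --- eliminate that entire case term by term with no computation, which is tidier and makes explicit that only the case with $n_x,n_y$ both even carries any content. One small point of hygiene: $\bra{b}P\ket{b}=(-1)^{w(b)}$ as you state it holds only when $P$ contains no identity factors; that is the case here, since all $N$ indices range over $x,y,z$, so nothing breaks, but the parenthetical ``and identity'' should be dropped or the weight restricted to the $\sigma_z$ positions.
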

\begin{proof}
To simplify notation we divide all $N$ observers into Alice and Bob, each in possession of $N/2$ qubits.
The $N$-partite quantum correlation functions are written as $T_{AB}(\rho)$, with $A$ and $B$ being sequences, each of length $N/2$, of indices $x,y,z$.
For example,
\begin{align}
	2 \, T_{AB}(\psi_2) =& \bra{\psi_2}\sigma_A\otimes\sigma_B\ket{\psi_2}\\
	=&\bra{1\dots1}\sigma_A\ket{1\dots1}\bra{\bar \psi}\sigma_B\ket{\bar \psi}\nonumber\\
	+ & \bra{\bar \psi}\sigma_A\ket{\bar \psi}\bra{1\dots1}\sigma_B\ket{1\dots1}\nonumber\\
	- & \bra{1\dots1}\sigma_A\ket{\bar \psi}\bra{\bar \psi}\sigma_B\ket{1\dots1}\nonumber\\
	- & \bra{\bar \psi}\sigma_A\ket{1\dots1}\bra{1\dots1}\sigma_B\ket{\bar \psi}.
\end{align}
Note that $\ket{\bar \psi} = \sigma_X \otimes \sigma_X \ket{\psi}$, where each $\sigma_X \equiv\sigma_x\otimes\dots\otimes\sigma_x$ operates on all the qubits of Alice/Bob.
Furthermore, for Alice we have $\sigma_X \sigma_A \sigma_X = (-1)^{a}\sigma_A$, where $a$ is the number of $x$ indices appearing in the sequence $A$.
Similarly, $\sigma_X\sigma_B\sigma_X = (-1)^{b}\sigma_B$, where $b$ is the number of $x$ indices appearing in the sequence $B$.
Therefore, if $a+b$ is even the $N$-partite correlation functions of $\rho$ read:
\begin{align}
	T_{AB}(\rho) =&\frac{1}{4}\sum_{i=1}^4 T_{AB}(\psi_i) \\
	=&\bra{00\dots0}\sigma_A\ket{00\dots0}\bra{\psi}\sigma_B\ket{\psi}\nonumber\\
	+&\bra{\psi}\sigma_A\ket{\psi}\bra{00\dots0}\sigma_B\ket{00\dots0}\nonumber\\
	+&\bra{\phi}\sigma_A\ket{\phi}\bra{\psi}\sigma_B\ket{\psi}\nonumber\\
	+&\bra{\psi}\sigma_A\ket{\psi}\bra{\phi}\sigma_B\ket{\phi}.
\end{align}
But due to an odd number of excitations in $\ket{\phi}$ we have that $\bra{00\dots0}\sigma_A\ket{00\dots0}$ and $\bra{\phi}\sigma_A\ket{\phi}$ are either both zero or have opposite sign (and the same for Bob).
We thus arrive at vanishing $N$-partite correlation functions of $\rho$.

If $a+b$ is odd, we instead have
\begin{align}
	T_{AB}(\psi_1)+T_{AB}(\psi_2) = &\bra{0\dots0}\sigma_A\ket{\psi}\bra{\psi}\sigma_B\ket{0\dots0}\nonumber\\
	+ & \bra{\psi}\sigma_A\ket{0\dots0}\bra{0\dots0}\sigma_B\ket{\psi}. \label{EQ_SUM2ODD}
\end{align}
Since by our assumption $\ket{\psi}$ is a superposition of states with only one excitation, 
both terms above vanish unless $A$ and $B$ each has only one $x$ or $y$ index.
Hence together they must have in total an even number of $x$ and $y$ indices.
But the number of $x$ indices, i.e. $a+b$ is assumed to be odd, so the number of $y$ indices must also be odd. 
Therefore both terms in \eqref{EQ_SUM2ODD} are imaginary and since they are complex adjoints of each other the sum $T_{AB}(\psi_1)+T_{AB}(\psi_2)$ vanishes.
Meanwhile, the contribution from $\ket{\psi_3}$ is
\begin{align}
	T_{AB}(\psi_3) = \bra{\phi}\sigma_A\ket{\phi}\bra{\psi}\sigma_B\ket{\psi}.
\end{align}
For this to be non-zero, $\sigma_A$ must be $\sigma_z\otimes\dots\otimes\sigma_z$, and therefore has no $x$ in the sequence: $a=0$. 
Since $a+b$ is odd, $b$ must be odd. 
But $\bra{\psi}\sigma_B\ket{\psi}$ is non-zero only if $B$ contains an even number of $x$ and $y$ indices in total.
Thus the number of $y$ indices must be odd leading to an imaginary $T_{AB}(\psi_3)$.
Since the correlation function is defined as the average of real numbers it is always read valued. 
We conclude that $T_{AB}(\psi_3) = 0$.
The same argument applies to $\ket{\psi_4}$.
\end{proof}

\bibliographystyle{apsrev4-1}
\bibliography{ent_no_corr}

\end{document}